\pgfplotsset{compat=1.8}
\newtheorem{theorem}{Theorem}
\numberwithin{theorem}{section}
\providecommand{\customgenericname}{}
\newcommand{\newcustomtheorem}[2]{%
	\newenvironment{#1}[1]
	{%
		\renewcommand\customgenericname{#2}%
		\renewcommand\theinnercustomgeneric{##1}%
		\innercustomgeneric
	}
	{\endinnercustomgeneric}
}
\newtheorem{proposition}[theorem]{Proposition}
\newtheorem{lemma}[theorem]{Lemma}
\newtheorem{corollary}[theorem]{Corollary}
\theoremstyle{definition}
\newtheorem{defn}[theorem]{Definition}
\newtheorem{assumption}[theorem]{Assumption}
\theoremstyle{remark}
\newtheorem*{remark}{Remark}
\newcommand{\post}{\mathcal{P}}
\newcommand{\1}{\mathbbm{1}}
\newcommand{\ut}{\underline{\tau}}
\newcommand{\E}{\mathbb{E}}
\renewcommand{\P}{\mathbb{P}}
\renewcommand{\b}{\beta}
\renewcommand{\t}{\theta}
\newcommand{\g}{\gamma}
\renewcommand{\k}{\kappa}
\title{Average Profits of Prejudiced Algorithms}
\author{David J. Jin\stepcounter{footnote}\thanks{Department of Economics, University of Pennsylvania; \href{mailtO:jindavid@sas.upenn.edu}{jindavid@sas.upenn.edu}. I am deeply indebted to Hanming Fang for his supervision and guidance, as well as his encouragement and criticism. I also extend my gratitude to Jere Behrman for all of the helpful suggestions since the inception of this project. Special thanks go to Kevin He, Jonathan Roth, Keunsang Song, and two anonymous referees. All errors are mine.}}
\date{}
\begin{document}
	
	\maketitle
	
	\begin{abstract}
		We investigate the level of success a firm achieves depending on which of two common scoring algorithms is used to screen qualified applicants belonging to a disadvantaged group. Both algorithms are trained on data generated by a prejudiced decision-maker independently of the firm. One algorithm favors disadvantaged individuals, while the other algorithm exemplifies prejudice in the training data. We deliver sharp guarantees for when the firm finds more success with one algorithm over the other, depending on the prejudice level of the decision-maker.
		\par\null\par\noindent\textit{Keywords: algorithmic fairness, discrimination, bias reversal, profit}
	\end{abstract}
	
	\pagebreak
	
	\section{Introduction}
	Algorithms are increasingly being used to assist or entirely replace human predictions on important outcomes. When designed carelessly, algorithms are susceptible to disparate outcomes, based on unacceptable characteristics like race, sex, or religion. We focus on a scenario where a firm uses a predictive algorithm to hire or reject applicants belonging to a disadvantaged group. The generality of our framework allows for extensions to similar environments such as loan applications, police stops, or parole decisions. We suppose the algorithm is trained on a data set generated by a human decision-maker prejudiced against the disadvantaged group. Then, we determine the level of success firms experience while existing in environments of varying levels of prejudice. We measure the success of a firm using \textit{average profits}, allowing us to determine which algorithm yields the highest payoff when screening for qualified applicants. We examine two of the most accessible and important algorithms denoted by $s_1$ and $s_2$. Algorithm $s_1$ ``reverses'' the prejudice embedded in the training data by favoring applicants belonging to the disadvantaged group. Algorithm $s_2$ ``bakes in'' the prejudice embedded in the training data by assigning lower scores to disadvantaged applicants. These two algorithms are analyzed by \cite{rr}: we say that $s_1$ exhibits ``bias reversal'' while $s_2$ exhibits ``bias in, bias out.'' To be precise, $s_1$ is an algorithm trained on the set of applicants accepted by the human decision-maker and predicts the outcome of interest, i.e., qualification. On the other hand, $s_2$ is trained on the entire set of applicants and predicts whether the human decision-maker accepts or rejects.
	
	Our definition of discrimination is driven by classical ideas in labor economics. We suppose the firm is a \textit{statistical} discriminator, while the data set is generated by both a statistical and \textit{taste-based} discriminating human decision-maker. In reality from a policy perspective, it is difficult to track whether a human decision-maker is taste-based discriminator, precisely because scoring mechanisms and cutoffs are computed mentally. On the other hand, when a firm makes use of an algorithm, concrete scores are output in some software. The explicit documentation of algorithmic scores restricts cases of blatant taste-based discrimination, which is likely restricted by some external oversight agency, i.e., government.
	
	It is unclear with which of the two algorithms the firm is more profitable. Indeed, our framework summarizes a modern challenge that firms face: which is the ``best'' algorithm when a firm is dealt a prejudiced training data set? Moreover, does the best algorithm change depending on the extent of prejudice of the prior era? We consider the interval of possible levels of bias the firm may take: on one extreme end, the human decision-maker hires all disadvantaged applicants, and on the other end, the human decision-maker rejects all disadvantaged applicants. We find that firms living in a sufficiently high-discrimination environment, on average, find more success with the algorithm that transmits bias as time goes on. Conversely, firms in a sufficiently low-discrimination environment are more successful with the algorithm that reverses bias.\footnote{A ``low-discrimination'' environment is intended to mean when the human decision-maker maximally \textit{favors} disadvantaged applicants.} A casual restatement of our main result is as follows.
	
	\setcounter{subsection}{1}
	
	\begin{theorem}[Theorem \ref{main2}, Corollary \ref{gprofit}]
		There exists a unique level of prejudice where the firm is equally profitable with the two algorithms. For any level of lesser prejudice, the firm is more successful with ``bias reversal.'' For any level of greater prejudice, the firm is more successful with ``bias in, bias out.''
	\end{theorem}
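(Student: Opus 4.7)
The plan is to recast the firm's preference comparison as the sign of the difference $\Delta(\tau) := \E[L(s_2;\tau)] - \E[L(s_1;\tau)]$ of expected regret losses, viewed as a function of the prejudice parameter $\tau$. The firm prefers $s_2$ over $s_1$ precisely when $\Delta(\tau) \le 0$, so the theorem reduces to three assertions about $\Delta$ on the admissible range of $\tau$: (i) a sign change, (ii) continuity, and (iii) strict monotonicity under the regular-environment hypothesis. Together these pin down a unique crossing $\tau^\ast$ and deliver the stated directional preferences via the intermediate value theorem.

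First, I would write out the per-applicant regret contribution conditional on the true type and the algorithmic classification, and integrate against the population distribution of disadvantaged applicants. The point of the decomposition is to isolate a prejudice-dependent piece (a $\tau$-weighted gain from rejecting a disadvantaged applicant) from an accuracy-dependent piece that does not involve $\tau$. Because $s_1$ favors the disadvantaged group while $s_2$ transmits the training-data bias, the two algorithms differ precisely in how many disadvantaged applicants they reject at each quality level, and the integrated expression should be affine—or at worst smooth—in $\tau$. Continuity of $\Delta$ then follows from dominated convergence.

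Second, I would evaluate $\Delta$ at the two extremes of prejudice. At the no-prejudice end the firm's objective coincides with predictive accuracy, where $s_1$ dominates because $s_2$ bakes in biased errors against qualified disadvantaged applicants; at the reject-all end the firm's objective rewards rejection of disadvantaged applicants outright, which $s_2$ delivers more aggressively than $s_1$ by construction. This yields $\Delta$ with opposite signs at the two endpoints and so guarantees a crossing.

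The main obstacle is step (iii): establishing strict monotonicity of $\Delta$ in $\tau$ in the regular environment. I would read the regularity assumption as a single-crossing or MLR-type condition that forces $s_2$ to reject disadvantaged applicants at a uniformly higher rate than $s_1$ at every realization of the training data. Once that is in hand, differentiating $\Delta$ with respect to $\tau$ produces an integrand of definite sign, yielding strict monotonicity and hence a unique $\tau^\ast$. The almost-sure qualifier then comes from the randomness of the training sample used to build the scores of \cite{rr}, against which the monotonicity must be shown to hold with probability one—this is where I expect the most bookkeeping, since one must rule out measure-zero pathologies in the joint distribution of scores and true types. In the irregular case the same differentiation will reveal that the single-sign property of $\Delta'$ fails, or that $\Delta$ stays nonpositive throughout, recovering the alternative conclusion flagged in the abstract.
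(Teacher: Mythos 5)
Your skeleton---evaluate the sign of the regret difference at the two endpoints of $(-x_u,x_q)$, establish continuity and monotonicity, and invoke the intermediate value theorem---does match the architecture of the paper (Propositions \ref{prop3} and \ref{prop4} for the endpoints, Theorem \ref{main} for continuity/monotonicity of the limit, Corollary \ref{gregret} for the directional preferences). However, three of your key mechanisms are misplaced, and one of them hides the main technical difficulty. First, the regularity condition \eqref{cond} is \emph{not} a single-crossing or MLR-type condition governing monotonicity of $\Delta$; it is precisely the endpoint inequality $\E[\lim_{\tau\to -x_u}u(s_1(\tau))]<\E[\lim_{\tau\to -x_u}u(s_2(\tau))]$. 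Your step two treats the sign of $\Delta$ at the partial extreme as automatic from ``predictive accuracy,'' but that sign is exactly what regularity buys and exactly what fails in the irregular case (Corollary \ref{last}); meanwhile your step three spends regularity on monotonicity, which in the paper holds in \emph{both} environments and comes from the score monotonicity of Theorems \ref{RR} and \ref{bin} (driven by Assumption \ref{as1}), not from the regularity inequality.

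Second, your claim that the integrated regret is ``affine---or at worst smooth---in $\tau$,'' to be differentiated for monotonicity, would fail: individual regret $u(s(\tau))=\mathcal{N}(s(\tau))-\mathcal{P}(s(\tau))$ has a jump discontinuity at the critical prejudice $\tau^d(\t)$ where the acceptance indicator $A'$ flips (Lemma \ref{sli}), so the finite-$m$ average $U$ has up to $m$ jumps and is nowhere differentiable at those points. The heart of the paper's proof of Theorem \ref{main} is showing these jumps are asymptotically negligible: non-atomicity of $\tau_1^d(\Theta)$ and $\tau_2^d(\Theta)$ (Lemmas \ref{atoms1} and \ref{atoms2}) implies almost surely no two applicants share a critical prejudice, so each jump has height $O(1/m)$ and the limit of $U$ is continuous and strictly monotone. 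Your proposal has no substitute for this step; even routing through the population functional and dominated convergence requires exactly that non-atomicity. Third, the ``almost surely'' in the theorem refers to the second-stage applicant draws $\{Q_i,\Theta_i\}_{i\in J}$ and the strong law of large numbers applied to $U$, not to randomness in the training sample---the paper abstracts away from estimation, and $s_1,s_2$ are population conditional expectations. You have correctly guessed the shape of the argument but attributed the load-bearing hypotheses to the wrong steps and omitted the discontinuity/non-atomicity argument that makes the intermediate value theorem applicable.
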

	
	It is crucial to note that the firm's success need not be related to the human decision-maker's success. For the sake of full generality, we may even suppose that the human decision-maker and the firm are independent, in the sense that the human decision-maker may predate the firm itself.\footnote{See the beginning of Section \ref{sec:3} for concrete details.} Thus, our results may characterize impartial firms that exist in environments of high prejudice that manifests in the form of a data set. To further emphasize that the human decision-maker and the firm may be orthogonal, our framework allows the payoffs from hiring a qualified or unqualified applicant to differ between the human decision-maker and the firm, suggesting that they are separate entities. Of course, if these payoffs are equal, one may imagine that the human decision-maker is a hiring manager or interviewer belonging to the firm.
	
	A brief summary of our framework is as follows. We consider hiring decisions in two stages: in the first stage, a human decision-maker hires or rejects disadvantaged applicants. Upon hiring an applicant, the human decision-maker observes his or her true level of qualification. At the conclusion of the first stage, the human decision-maker compiles all hiring data, which is then made available to the firm in the second stage. Importantly, this data set suffers from \textit{selective labels} as described by \cite{slp} because the true ability of rejected applicants is unobserved. In the second stage, the firm uses this data set to construct one of the two aforementioned algorithms. Using this algorithm, the firm scores and accepts or rejects future applicants, after which profits become observed.
	
	This paper is most closely related to the intersection of the study of discrimination and algorithmic bias, often referred to as \textit{algorithmic fairness}. In particular, we implement the two theories of taste-based and statistical discrimination in the setting of algorithmic predictions. \cite{beck} defines taste-based discrimination as the act of harboring a ``taste for discrimination'' against a certain group, implying a positive payoff in favor of discriminating. \cite{fang} generally refers to statistical discrimination as the act of a human decision-maker using observed characteristics of an individual in place of unobserved yet outcome-relevant characteristics. The seminal works of \cite{phelps} and \cite{arrow} led to influential papers involving models of signal extraction in the labor market. For example, \cite{lund} incorporates the notion of skill investments in Phelps' model. \cite{cl} exploits this idea along with Arrow's model to show that two \textit{ex ante} identical groups may end up in different Pareto ranked equilibria. Adjacent areas of research investigate discrimination arising from inter-group interactions (\cite{moro, mailath, fang2001}). These models can be used to inform policy decisions. For instance, \cite{fang2006,chung, chan} and others apply the models of discrimination to evaluate the effects of affirmative action, where governments institute policies in favor of disadvantaged groups. This paper aims to fit a traditional model of discrimination into the context of algorithmic bias. An empirical study conducted by \cite{propub} reveals that a risk assessment algorithm used by the court system in Broward County, Florida is more likely to falsely flag black defendants as future criminals. Many subsequent analyses reveal that under general circumstances, it is impossible to achieve multiple metrics of fairness simultaneously (\cite{ito, hardt, fpdi, corb}). These findings suggest that common standards of algorithmic fairness are challenging to achieve, if possible at all. Given this difficulty, one is naturally interested in examining the behavior of algorithms in the presence of unfairness in order to devise empirical strategies to reduce bias as much as possible. In an effort to pin down the source of such algorithmic prejudice, an influx of studies spearheaded by econometricians and computer scientists propose a number of explanations. The notion of ``bias in, bias out'' is a suspect of generating algorithmic bias, where if an algorithm is trained on a data set generated by prejudiced human decision-makers, the algorithm would reflect these biases (\cite{sol, bibo}). Another candidate behind sowing bias into algorithmic predictions is known as the \textit{selective labels problem}: a type of missing data problem with non-randomly selected samples, where a human decision-maker determines which outcomes of observations are observed (\cite{slp, charfair, rr}). Since the outcomes of observations that are not admitted are not observed by the researcher, making accurate predictions becomes challenging when restricted to the usual econometric tools. A particular driver of this paper, \cite{rr} presents a surprising result where an algorithm reverses the prejudice of the human decision-maker under selectively labeled data, a phenomenon they refer to as ``bias reversal.'' That is, the algorithm treats the minority group more favorably, contrary to the notion of ``bias in, bias out.''
	
	The paper proceeds as follows. In Section \ref{sec:3}, we formulate our framework and analyze the two algorithms. In Section \ref{sec:4}, we define average profits and deliver results on firm outcomes. In Section \ref{sec:5}, we display a series of simulations. In Section \ref{sec:6} we conclude. 
	
	\section{A Two-Stage Model}\label{sec:3}
	
	In this section, we develop a mathematical framework for the scenario of a firm hiring applicants in two stages. The first stage involves a prejudiced human decision-maker scoring applicants. In the second stage, the firm builds a predictive algorithm by using the data generated in the first stage. We suppose the firm is born without access to an existing algorithm. We allow the human decision-maker to be either independent of or belonging to the firm---in either case, our results are unchanged. Moreover, the joint distribution of the applicant population is allowed to differ between the two stages, so long as our main distributional assumption, Assumption \ref{as1}, still holds. The distinguishing factor between the first and second stages is solely the mechanism by which applicants are scored. While valuable as an extension to this paper, we do not consider competitive interactions between multiple firms. Therefore, we suppose there is a single firm that chooses to accept or reject each applicant from a set.
	
	\subsection{Environment and Information Structure}
	In this section, we detail the environment. Consider a collection of applicants indexed by the set $I:=\{1,\hdots, n\}$. Each applicant is described by a random variable $Q$ where $Q\in\{0,1\}$ denotes qualification. We let $Q=1$ correspond to qualified applicants and $Q=0$ to unqualified applicants. We let $\pi:= \P(Q=1)\in (0,1)$. We limit $I$ to only include disadvantaged applicants because we assume that the firm is a statistical discriminator; while the firm in reality observes two sets of applicants distinguished by potentially contentious variables, e.g., race, we allow the firm to use entirely different algorithms for the two groups. In this sense, in general it may be misleading to consider which algorithms a firm is better off with based on the performance of algorithmic predictions on a pooled group of applicants.
	
	Suppose a human decision-maker chooses to either accept or reject job applications from $I$. The human decision-maker's payoff from an arbitrary applicant $i\in I$ depends on $Q_i$. The human decision-maker receives a positive payoff $x_q>0$ from hiring a qualified applicant. Conversely, the human decision-maker faces a loss $-x_u<0$ if the human decision-maker hires an unqualified applicant. If the human decision-maker does not hire a particular applicant, they receive zero payoff regardless of whether or not the applicant was qualified. We display the payoff structure in Table \ref{tbl:tab1}.
	
	\begin{table}[h]
		\centering
		\begin{tabular}{l|l|c|c|c}
			\multicolumn{2}{c}{}&\multicolumn{2}{c}{Hiring Decision}&\\
			\cline{3-4}
			\multicolumn{2}{c|}{}&Hire&Reject&\multicolumn{1}{c}{}\\
			\cline{2-4}
			\multirow{2}{*}{Applicant}& Qualified & $x_q>0$ & $0$\\
			\cline{2-4}
			& Unqualified & $-x_u<0$ & $0$\\
			\cline{2-4}
		\end{tabular}
		\caption{The human decision-maker's payoff from hiring a qualified or unqualified applicant}\label{tbl:tab1}
	\end{table}
	
	It is critical to note that we allow the human decision-maker to be independent of the firm. That is, the human decision-maker may be screening these applicants for a separate entity---the only link we strictly require between the human decision-maker and the firm is that the firm acquires access to the data set generated by the human decision-maker. To be precise, the firm values qualified and unqualified applicants differently from the human decision-maker: the firm receives a payoff of $v_q$ for hiring a qualified applicant and suffers a cost $-v_u$ for hiring an unqualified applicant. Of course, one may allow $x_q=v_q$ and $x_u=v_u$ if the human decision-maker is either very similar to or belongs to the firm.
	
	The human decision-maker is naturally unable to observe each $Q_i$ directly. Instead, the human decision-maker associates each applicant with a realization of the pair of random variables $(\Theta, \Gamma)$. Intuitively, $\Theta$ represents the concrete data that can be documented (e.g., highest level of education, GPA, years of experience). We let $\Gamma$ represent the undocumented characteristics of an applicant that the human decision-maker extracts. For example, the applicant's demeanor, manner of speaking, and interpersonal abilities are observed by the human decision-maker oftentimes through an interview. However, $\Gamma$ is unable to documented in a data set, due to its qualitative and instinctual nature. Indeed, the comparative statics results described in \cite{rr} exploit the unobserved nature of $\Gamma$.
	
	Next, the joint law $(\Theta, \Gamma\mid Q)$ corresponds to  the joint density functions $h_q(\t, \gamma)$ when $Q=1$ and $h_u(\t, \gamma)$ when $Q=0$. Thus, we allow for the joint law of the signal to depend on the true qualification level. Denote $f_x(\t)$ and $g_x(\gamma)$ the marginal density functions where $x\in \{q,u\}$. We present our main assumption that characterizes the qualifications of applicants through their signal values.
	
	\begin{assumption}[Monotone Likelihood Ratio Property]\label{as1}
		The joint densities satisfy the \textit{strict monotone likelihood ratio property}, where 
		\[
		l(\t, \gamma) := \frac{h_q(\t,\gamma)}{h_u(\t, \gamma)}
		\]
		is jointly, strictly increasing and continuous in $\theta$ and $\gamma$.
	\end{assumption}
	
	\begin{remark}
		In the univariate case, the space is totally ordered. In our case, the partial order is naturally given by $(\t_1,\g_1) < (\t_2,\g_2)$ if and only if $\t_1 < \t_2$ and $\g_1 < \g_2$. It should be noted that the claim need not hold for $\t_1 < \t_2$ but $\g_1 > \g_2$ for example.
	\end{remark}
	
	We opt for the strict version of the property rather than the weak version, since the latter suggests that higher signals may not correspond to being a better signal. Assumption \ref{as1} intuitively suggests that qualified applicants are more likely to receive higher signals over unqualified applicants. Assumption \ref{as1} immediately implies that both $f_q(\t)/f_u(\t)$ and $g_q(\g)/g_u(\g)$ are strictly increasing and continuous in $\t$ and $\g$ respectively.
	
	\subsection{First Stage: Human Prediction}
	We consider the first stage of applications, where applicants are accepted or rejected by means of a human decision-maker. The skeleton of this framework is largely motivated by \cite{cl}. This setting corresponds to traditional job applications where a senior employee or human resources manager interviews applicants and decides which ones to hire. We assume the human decision-maker is Bayesian, in the sense that he updates his prior beliefs on the level of qualification of each applicant by using her signals $\t$ and $\gamma$. The posterior probability $\P(Q=1 \mid \Theta = \t, \Gamma = \g)$ that the applicant is qualified denoted by $\kappa(\t,\gamma)$ follows from Bayes' rule:
	
	\begin{equation}\label{eq:1}
		\kappa(\t,\gamma) := \frac{\pi h_q(\t, \g)}{\pi h_q(\t, \g) + (1-\pi)h_u(\t, \g)}
	\end{equation}
	
	Naturally, the human decision-maker assigns higher posterior probabilities to applicants with greater signals. It is well known that $\kappa(\t, \g)$ is jointly strictly increasing. Next, let $\tau\in \mathbb{R}$ denote the decision-maker's prejudice against applicants. Then, the decision-maker accepts each applicant if and only if the expected payoff exceeds the cutoff:
	
	\begin{equation}\label{eq:2}
		\kappa(\t,\gamma)x_q - [1- \kappa(\t, \gamma)]x_u > \tau.
	\end{equation}
	
	In other words, the human decision-maker is a taste-based discriminator. We may interpret $\tau<0$ as a human decision-maker who favors disadvantaged applicants. When $\tau \geq x_q$, the human decision-maker's taste for discrimination overrides the payoffs from hiring qualified applicants leading to every applicant being rejected regardless of qualification. Conversely, if $\tau \leq -x_u$, the human decision-maker favors applicants to the point where everyone is accepted. To avoid the trivial, we focus on the interval $\tau \in (-x_u, x_q)$. Inserting the expression for posterior probability \eqref{eq:1} into \eqref{eq:2}, we define the indicator $A(\tau)$ which equals 1 if the applicant is accepted and 0 otherwise:
	
	\begin{equation}\label{eq:3}
		A(\tau) := \1\left\{ \frac{h_q(\t,\g)}{h_u(\t,\g)}  > \frac{(1-\pi)(x_u+\tau)}{\pi(x_q - \tau)}\right\}.
	\end{equation}
	
	We frequently denote $A(\tau)$ by simply $A$ for convenience, but it should be understood that $A$ critically depends on the parameter $\tau$. Once the human decision-maker scores and hires or rejects applicants according to the rule \eqref{eq:3}, the hired applicants' true types $Q$ are revealed. The firm then compiles the signals and outcomes for all applicants into a data set $\{D_i\}_{i\in I} = (\t_i, A_i, Q_i)$. Importantly, $D_i$ does not contain $\gamma_i$. Also, $Q_i$ is only observed when $A_i=1$. Otherwise, we treat it as a missing value. Here, we can see that the firm faces selective labels. An extension may discuss a scenario where the true types of rejected applicants are observed. For example, the productivity of rejected applicants may be observed at competing firms.
	
	\subsection{Second Stage: Algorithmic Prediction}
	With the conclusion of the first stage of hires, the firm is now able to construct algorithms using the data set $\{D_i\}_{i\in I}$. The joint distribution of the observed data crucially depends on the value of $\tau$. Since $Q_i$ is only observed for $A_i=1$, different values of $\tau$ lead to different applicants possessing a label.
	
	We introduce a second set of applicants indexed by the set $J:=\{1,\hdots, m\}$. Identically to the first stage, each applicant is described by $Q$, but instead of a human decision-maker personally scoring applicants, the firm who does not observe $\Gamma$ builds an algorithm. In the following, we assume the data-generating process of $(\Theta, \Gamma, Q)$ are identical to the first stage, but this need not be the case. So long as the joint distribution of $\Theta$ and $\Gamma$ conditional on $Q$ satisfy Assumption \ref{as1}, all of our following results hold. This generality allows for the argument that the inherent distribution of applicants changes over time. In any case, we let the joint distributions to be identical to that of the first stage to simplify proofs, but we reiterate that identical results can be derived using small modifications in argumentation with new joint distributions.
	
	 While real-life firms have countless options on how to design predictive algorithms, we abstract away from the estimation problem and prioritize the behavior of the algorithms as $\tau$ changes. We consider two common predictors under squared loss:
	
	\begin{equation*}
		s_1(\t, \tau) := \E[Q\mid \Theta = \t, A(\tau)=1]
	\end{equation*}
	\begin{equation*}
		s_2(\t, \tau) := \E[A(\tau) \mid \Theta = \t].
	\end{equation*}
	
	We frequently denote the two by $s_1$ and $s_2$ for convenience, omitting some or all arguments when appropriate. The key difference between the two algorithms is the outcome of interest. Algorithm $s_1$ predicts whether applicants are qualified within the accepted pool while algorithm $s_2$ predicts whether applicants are accepted. It follows that $s_1$ is trained on selectively labeled data, and $s_2$ is trained on all of $\{D_i\}_{i\in I}$. \cite{rr} show that $s_1$ and $s_2$ exhibit ``bias reversal'' and ``bias in, bias out,'' respectively.
	
	\begin{theorem}[{\cite[Theorem 1 and 3]{rr}}]\label{bin}
		For any $\t$, $s_1(\t, \tau)$ is strictly increasing in $\tau$ and $s_2(\t, \tau)$ is strictly decreasing in $\tau$.
	\end{theorem}
	
	In other words, the algorithm $s_1$ exhibits ``bias reversal,'' where the score of an applicant increases with the human decision-maker's level of taste-based discrimination. Since firms are interested in the true qualification label of applicants, the purpose of $s_1$ is clear. On the other hand, the intuition of $s_2$ does not come immediately. Prior work discuss that the target variable that employers aim to predict widely varies, and often times data constraints akin to selective labels may lead firms to choose future employees based on who has been hired in the past. Moreover, machine learning engineers may simply be tasked by management to find a target variable that makes selection of employees easier, and scores similar to $s_2$ are leading candidates.\footnote{See \cite{evc} and \cite{bo} for a discussion on the choice of target variables in employment settings.} Before we continue, we prove continuity.
	
	\begin{lemma}\label{cont}
		$s_1(\t,\tau)$ and $s_2(\t,\tau)$ are continuous in $\tau$ for any $\t$.
	\end{lemma}
	
	\begin{proof}
		See Appendix \hyperref[AppendixA]{A}.
	\end{proof}
	
	We next evaluate interactions between the two algorithms and compare scores at given values of $\tau$. In particular, there exists a unique level of prejudice where $s_1$ and $s_2$ are equal.
	
	\begin{proposition}\label{prop1}
		For each $\t$, there exists a unique $\tau^\star\in (-x_u,x_q)$ where $s(\t, \tau^\star):= s_1(\t, \tau^\star)=s_2(\t, \tau^\star)$.
	\end{proposition}
	
	\begin{proof}
		See Appendix \hyperref[AppendixA]{A}.
	\end{proof}
	
	\begin{figure}[H]
		\centering
		\captionsetup{justification=centering}
		\resizebox{10cm}{!}{\linespread{1.25}
\pgfplotsset{compat = newest}
\begin{tikzpicture}
\begin{axis}[
xmin = -11, xmax = 11,
ymin = -0.07, ymax = 1,
xtick = {-10, 10},
ytick = {0, 1},
clip = false,
scale = 0.8,
axis lines* = left,
x label style = {at={(axis description cs:0.5,-0.1)},anchor=north}, xlabel={Human Prejudice $\tau$},
y label style={anchor=south}, ylabel={\large Score $s$},
xticklabels={$-x_u$,$x_q$}
]
\addplot[domain = -10:10, restrict y to domain = -10:10, samples = 300, color = black]{((-1.13 / 3.14) * rad(atan(0.4 * x))) + 0.5 };
\addplot[domain = -10:10, restrict y to domain = -10:10, samples = 300, color = black]{((0.83 / 3.14) * rad(atan(0.3 * x + 0.4))) + 0.65};
\addplot[color = black, mark = *, only marks, mark size = 3pt] coordinates {(-1.2, 0.66)};
\addplot[color = black, dashed, thick] coordinates {(-1.2, -0.07) (-1.2, 0.66)};
\addplot[color = black, dashed, thick] coordinates {(-11, 0.33) (11, 0.33)};
\node [below] at (-1.2, -0.07) {$\tau^\star$};
\node [right] at (11, 0.33) { $\mathbb{P}(Q=1\mid \Theta = \theta)$};
\addplot[color = black, dashed, thick] coordinates {(-11, 0) (11, 0)};
\addplot[color = black, dashed, thick] coordinates {(-11, 1) (11, 1)};
\node [below] at (7, 0.83) { $s_1(\theta, \tau)$};
\node [below] at (-7, 0.83) { $s_2(\theta, \tau)$};
\draw[->] (-7, 0.8) to (-6.5, 0.89);
\draw[->] (7, 0.8) to (6.3, 0.92);
\end{axis}
\end{tikzpicture}}
		\caption{Predicted scores from both algorithms $s_1$ and $s_2$ for\\an applicant across $\tau\in (-x_u,x_q)$}\label{fig:fig1}
	\end{figure}
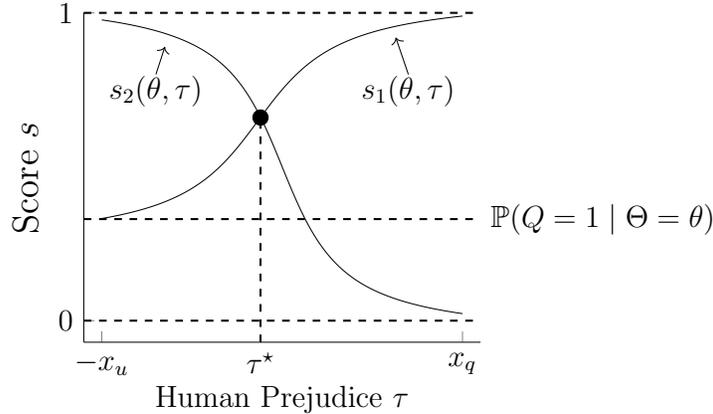
	
	As the human decision-maker becomes more prejudiced, $s_1$ scores eventually become greater than their respective $s_2$ scores. Put differently, for each applicant, there must exist a unique level of prejudice $\tau^\star$ where $s_1$ and $s_2$ are equal. For all levels of prejudice below $\tau^\star$, $s_2$ produces a greater score than $s_1$ and the opposite for all levels of prejudice above $\tau^\star$.
	
	\begin{corollary}\label{greater}
		Fix $\t$. Then, $s_1(\t, \tau) > s_2(\t, \tau)$ if and only if $\tau > \tau^\star$.
	\end{corollary}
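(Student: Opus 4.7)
The plan is to reduce the corollary to the strict monotonicity of the single function $d(\tau) := s_1(\tau) - s_2(\tau)$, since all the needed ingredients have been assembled in the preceding results.

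First, I would fix $\theta$ and observe that by Theorem \ref{RR}, $s_1(\tau)$ is strictly increasing in $\tau$, while by Theorem \ref{bin}, $s_2(\tau)$ is strictly decreasing in $\tau$. Therefore $d(\tau) = s_1(\tau) - s_2(\tau)$ is strictly increasing in $\tau$ on the interval $(-x_u, x_q)$, being the difference of a strictly increasing function and a strictly decreasing one. Next, by Proposition \ref{prop1}, there exists a (unique) $\tau^\star \in (-x_u, x_q)$ at which $s_1(\tau^\star) = s_2(\tau^\star)$, i.e.\ $d(\tau^\star) = 0$.

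From strict monotonicity of $d$ together with $d(\tau^\star) = 0$, the two directions of the biconditional follow immediately: if $\tau > \tau^\star$, then $d(\tau) > d(\tau^\star) = 0$, so $s_1(\tau) > s_2(\tau)$; conversely, if $\tau \leq \tau^\star$, then $d(\tau) \leq 0$, so $s_1(\tau) \leq s_2(\tau)$, contradicting $s_1(\tau) > s_2(\tau)$. Hence $s_1(\tau) > s_2(\tau)$ if and only if $\tau > \tau^\star$.

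Since every nontrivial piece (continuity, strict monotonicity of each score, and the existence/uniqueness of the crossing point) has already been established, there is no serious obstacle here; the only thing to be careful about is to cite the strict (not weak) monotonicity from Theorems \ref{bin} and \ref{RR}, which is what allows the ``only if'' direction to rule out equality for $\tau < \tau^\star$.
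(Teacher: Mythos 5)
Your proof is correct and follows essentially the same route as the paper, which likewise deduces the claim immediately from the strict monotonicity in Theorems \ref{bin} and \ref{RR} together with the unique crossing point $\tau^\star$ from Proposition \ref{prop1}; packaging the argument via the strictly increasing difference $d(\tau)=s_1(\tau)-s_2(\tau)$ is just a cleaner way of writing the same case check.
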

	\begin{proof}
		The proof is immediate from evaluating $s_1(\t,\tau)$ and $s_2(\t,\tau)$ for $\tau>\tau^\star$ or $\tau<\tau^\star$ using Theorem \ref{bin}.
	\end{proof}
	
	An equivalent statement of Corollary \ref{greater} is $s_1(\t, \tau)<s_2(\t,\tau)$ if and only if $\tau<\tau^\star$. Intuitively, if the human decision-maker is less (more) prejudiced than the equalizing level of prejudice, the algorithm $s_2$ produces greater (smaller) predictions than $s_1$. Figure \ref{fig:fig1} visually demonstrates the claims of Proposition \ref{prop1}, Corollary \ref{greater}, and Corollary \ref{cor1}.
	
	\begin{corollary}\label{cor1}
		For $\tau \to -x_{u}$, we have $s_1(\t,\tau) \to \E[Q\mid \Theta = \t]$ and $s_2(\t,\tau) \to 1$. For $\tau \to x_{q}$, we have $s_1(\t,\tau) \to 1$ and $s_2(\t,\tau) \to 0$.
	\end{corollary}
	
	\begin{proof}
		The values are computed in the proof of Proposition \ref{prop1}.
	\end{proof}
	
	Corollary \ref{cor1} precisely states the limiting values of $s_1$ and $s_2$ as prejudice approaches its extreme values. That is, $\tau$ approaching $-x_u$ or $x_q$ indicates the extreme cases of partiality or prejudice respectively. To understand the behavior of $s_1$: intuitively, if the human decision-maker hires all applicants, the probability that an applicant is qualified converges to the population's proportion of qualified applicants for that specific $\t$. Conversely, if the human decision-maker's prejudice level approaches $x_q$, only the applicants with the best realizations of $\Gamma$ are accepted, leading probabilities of being qualified to approach 1. The behavior of $s_2$ is more straightforward because it matches the human decision-maker's hiring rule.
	
	In order to analyze with which algorithm the firm finds more success, we define average profit in Section \ref{sec:5}. Regardless of which of the two algorithms a firm chooses, the procedure for computing average profit does not vary. Hence, we arbitrarily denote the applicants' scores as $s$ which may either be $s_1$ or $s_2$. With a slight abuse of notation, we associate the score $s_i$ with applicant $i\in J$ when the algorithm $s$ is arbitrary. After an applicant $i\in J$ is matched with a score $s_i$, the firm's expected payoff from applicant $i$ is given by
	
	\begin{equation}\label{eq:6}
		\mathcal{N}(s_i) := s_iv_q - [1-s_i]v_u
	\end{equation}
	
	where $v_q$ and $v_u$ are the firm-analogues of the human decision-maker's $x_q$ and $x_u$. In \eqref{eq:6}, the first term reflects the gain from making a correct hire, and the second term reflects the loss from making a bad hire. \eqref{eq:6} is precisely the second-stage analogue of the first-stage predicted payoff in \eqref{eq:2}. The only difference is that the firm uses a heuristic where $s(\t)$ is treated as a posterior probability in place of the $\kappa(\t, \g)$. Similar to the decision rule in the first stage \eqref{eq:3}, we define the indicator $A'(\tau)$ which equals 1 if the applicant is accepted and 0 otherwise.
	
	\begin{equation}\label{eq:7}
		A'(s_i) := \mathbbm{1}\left\{s_iv_q - [1-s_i]v_u > 0\right\}
	\end{equation}
	
	In other words, the firm accepts an applicant if their predicted payoffs are positive and rejects otherwise. Note that, unlike the first-stage, we assume that the firm is unable to discriminate based on taste.\footnote{When an algorithm is being used to make decisions, we assume the existence some oversight agency or regulation that restricts a concrete case of taste-based discrimination, as discussed in the introduction.} Once the second round of hires is complete, the firm observes the true types of the hires $Q_i$ and observes profits for each applicant $i$:
	
	\begin{equation}
		\mathcal{P}(s_i) := A'(s_i)\cdot\left(Q_iv_q - [1-Q_i]v_u\right).
	\end{equation}
	
	$\mathcal{P}$ is precisely the realized gain from each applicant based on their true type. If an applicant was not accepted, regardless of $Q_i$ the firm does not make any gains. We can see that the firm in some sense is a ``misspecified Bayesian,'' where $s$ is a misspecified prior because it does not incorporate $\Gamma$ which is unobserved. Figure \ref{fig:fig2} visually illustrates the proceedings of our two-staged model.
	
	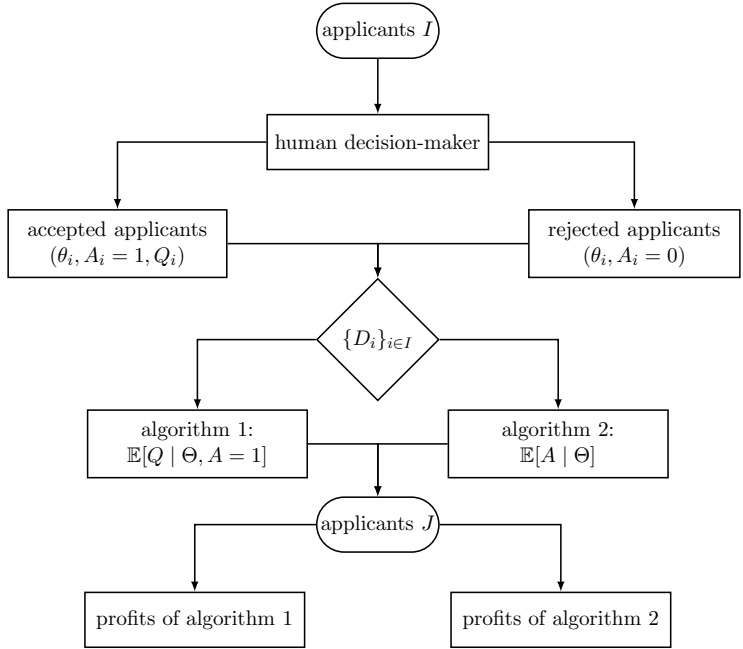
\begin{figure}[h]
		\centering
		\captionsetup{justification=centering}
\usetikzlibrary{shapes,positioning}
\centering
	\scalebox{0.73}{
		\begin{tikzpicture}[font=\small,thick, node distance=1cm]
			\node[draw,
			rounded rectangle,
			minimum width=2.5cm,
			minimum height=1cm] (I) {applicants $I$};
			\node[draw,
			below=of I,
			minimum width=3.5cm,
			minimum height=1cm] (hdm) {human decision-maker};
			\node[draw,
			below left = of hdm,
			minimum width = 3.5cm, minimum height = 1cm] (ra) {\begin{tabular}{c}
					accepted applicants \\ $(\theta_i, A_i = 1, Q_i)$
			\end{tabular}};
			\node[draw,
			below right = of hdm,
			minimum width = 3.5cm, minimum height = 1cm] (rj) {\begin{tabular}{c}
					rejected applicants \\ $(\theta_i, A_i = 0)$
			\end{tabular}};
			
			\node[draw,
			diamond,
			below = of hdm] (d) at (0, -3.5) {$\{D_i\}_{i\in I}$};
			
			\node[draw,
			below left = of d,
			minimum width = 4cm, minimum height = 1cm] (a1) {\begin{tabular}{c}
					algorithm 1: \\
					$\mathbb{E}[Q\mid \Theta, A = 1]$
			\end{tabular}};
			
			\node[draw,
			below right = of d,
			minimum width = 4cm, minimum height = 1cm] (a2) {\begin{tabular}{c}
					algorithm 2: \\ 
					$\mathbb{E}[A\mid \Theta]$
			\end{tabular}};
			
			\node[draw,
			rounded rectangle,
			minimum width=2.5cm,
			minimum height=1cm] at (0, -9) (J) {applicants $J$};
			
			\node[draw,
			below left = of J,
			minimum width = 4cm, minimum height = 1cm] (r1) {profits of algorithm 1};
			
			\node[draw,
			below right = of J,
			minimum width = 4cm, minimum height = 1cm] (r2) {profits of algorithm 2};
			
			\draw[-latex] (I) edge (hdm);
			\draw[-latex] (hdm) -| (ra);
			\draw[-latex] (hdm) -| (rj);
			\draw[-latex] (ra) -| (d);
			\draw[-latex] (rj) -| (d);
			\draw[-latex] (d) -| (a1);
			\draw[-latex] (d) -| (a2);
			\draw[-latex] (a1) -| (J);
			\draw[-latex] (a2) -| (J);
			\draw[-latex] (J) -| (r1);
			\draw[-latex] (J) -| (r2);
		\end{tikzpicture}
	}
		\caption{An illustration of the full timeline of the two-stage model}\label{fig:fig2}
	\end{figure}
	
	\section{Profits}\label{sec:4}
	
	After making hiring decisions using the rule in \eqref{eq:7}, the firm observes profits $\post(s_i)$. In the previous section, we did not make any claims about with which algorithm the firm is more successful. In particular, it is unclear whether the firm would rather use an algorithm that transmits bias or one that reverses bias.
	
	Denote $|J| = m$ and suppose $\Theta$ and $Q$ have the same data generating process specified in Section \ref{sec:3}. As already discussed, we may allow the joint distributions to differ from the first stage, but assuming the same simplifies discussions and proofs. In this section, we let $m$ increase as more applicants apply to the firm. Thus, the notion of profits is clearly defined.
	
	\begin{defn}[Average Profits]
		The \textit{average profit} generated by an algorithm $s$ for a prejudice level $\tau$ is defined as
		\[
		u(s(\tau)) := \frac{1}{m}\sum_{i=1}^m \post(s_{i}(\tau)) = \frac{1}{m}\sum_{i=1}^m \E[A'(s(\tau))\cdot (Qv_q - [1-Q]v_u)].
		\]
	\end{defn}
	
	In Figure \ref{fig:fig3}, we display a plot of profit $\post(s_1(\t, \tau))$ for both a qualified and unqualified applicant over human prejudice. The vertical dotted line signifies the level of $\tau$ such that $\mathcal{N}(s_1(\tau)) = 0$. In other words, for any level of prejudice greater than this $\tau$, the applicant is accepted. Profits jumps upward for $Q=1$ by $v_q$ and downwards for $Q=0$ by $v_u$. Naturally, profits become $v_q>0$ upon hiring a qualified applicant and $-v_u<0$ for an unqualified applicant. It's easy to see that $u(s(\tau))$ is discontinuous at $m$ points at most.\footnote{Profit need not be discontinuous for all $m$ applicants. In particular, if $\E[Q\mid \Theta = \t] > v_u/(v_u + v_q)$, then profit is continuous for that $\t$.}
	
	\begin{figure}[H]
		\centering
		\captionsetup{justification=centering}
		\resizebox{12cm}{!}{

\linespread{1.25}
\pgfplotsset{compat = newest}
\begin{tikzpicture}
\begin{axis}[
title = {(a) $Q=1$},
xmin = -11, xmax = 11,
ymin = -0.07, ymax = 1,
xtick = {-10, 10},
ytick = {0, 0.21, 0.72, 1},
clip = false,
scale = 0.8,
x label style = {at={(axis description cs:0.5,-0.1)},anchor=north}, xlabel={Human Prejudice $\tau$},
y label style={anchor=south}, ylabel={Profit $\post(s_1(\tau))$},
xticklabels={$-x_u$,$x_q$},
yticklabels = {}
]
\draw[dashed] (-1.2, 0.21)--(-1.2, 0.72);
\draw[dashed] (-10, 0.21)--(10, 0.21);
\draw[dashed] (-10, 0.72)--(10, 0.72);
\draw[-, line width = 0.50mm] (-10, 0.21)--(-1.2, 0.21);
\draw[-, line width = 0.50mm] (-1.2, 0.72)--(10, 0.72);
\node [right] at (11.3, 0.72) {$v_q$};
\node [right] at (11.3, 0.21) {$0$};
\node at (-1.2, 0.205) {\textbullet};
\node at (-1.2, 0.715) {$\circ$};
\end{axis}
\end{tikzpicture}
\hspace{11pt}
\begin{tikzpicture}
\begin{axis}[
title = {(b) $Q=0$},
xmin = -11, xmax = 11,
ymin = -0.07, ymax = 1,
xtick = {-10, 10},
ytick = {0, 0.21, 0.72, 1},
clip = false,
scale = 0.8,
x label style = {at={(axis description cs:0.5,-0.1)},anchor=north}, xlabel={Human Prejudice $\tau$},
y label style={anchor=south}, ylabel={Profit $\post(s_1(\tau))$},
xticklabels={$-x_u$,$x_q$},
yticklabels = {}
]
\draw[dashed] (0.3, 0.21)--(0.3, 0.72);
\draw[dashed] (-10, 0.21)--(10, 0.21);
\draw[dashed] (-10, 0.72)--(10, 0.72);
\draw[-, line width = 0.50mm] (-10, 0.72)--(0.3, 0.72);
\draw[-, line width = 0.50mm] (0.3, 0.21)--(10, 0.21);
\node [right] at (11.3, 0.72) {$0$};
\node [right] at (11.3, 0.21) {$-v_u$};
\node at (0.3, 0.715) {\textbullet};
\node at (0.3, 0.205) {$\circ$};
\end{axis}
\end{tikzpicture}
		\caption{Profit generated by algorithm $s_1$ for a\\qualified and unqualified applicant}\label{fig:fig3}
	\end{figure}
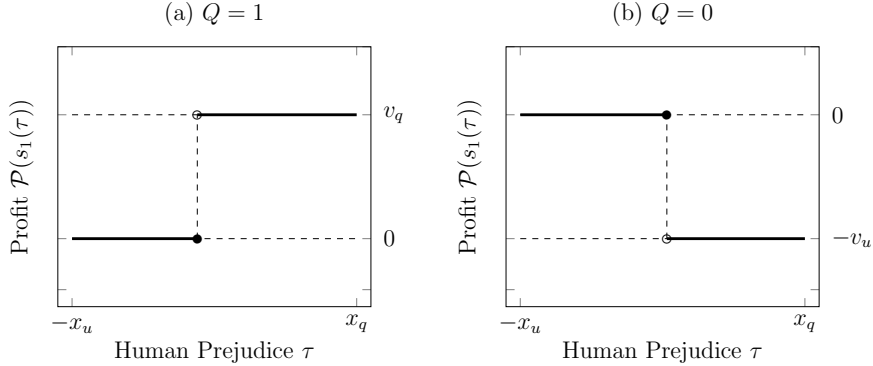

Before we present our main results, we state a mild assumption on the behavior of $s_1(\t)$ to guide our proofs.

	\begin{lemma}\label{s2inc}
	$s_2(\t)$ is strictly increasing in $\t$ for any $\tau\in (-x_u,x_q)$.
\end{lemma}

\begin{proof}
	Recall that $s_2(\t, \tau) = \P(\Gamma > \g_1(\t))$ where $\g_1(\t)$ is defined in \eqref{eq:g1}. As $\t$ increases, we see that $\g_1(\t)$ strictly decreases.
\end{proof}

\begin{assumption}\label{as3}
	$s_1(\t)$ is strictly increasing in $\t$ for any $\tau\in (-x_u,x_q)$.
\end{assumption}

In other words, we suppose that for a fixed $\tau$, increases in $\t$ must lead to increases in $s_1(\t)$. This assumption states that applicants with stronger profiles should receive higher scores from an algorithm that predicts qualification, which must clearly be true in the economics sense. Moreover, this assumption is mathematically benign because $s_1(\t)$ is non-increasing only under unusual mathematical circumstances. We can rewrite $s_1(\t)$:
\begin{equation}\label{eq:g1}
s_1(\t) = \E[\k(\t, \Gamma) \mid \Gamma > \g_1(\t)],\quad \g_1(\t) \equiv \inf\left\{\g : \frac{h_q(\t, \g )}{h_u(\t, \g)} > \frac{(1-\pi)(x_u+\tau)}{\pi(x_q-\tau)}\right\}.
\end{equation}
By Assumption \ref{as1}, we know that the likelihood ratio is strictly increasing in $\t$ and $\g$ which means that $\g_1(\t)$ is strictly decreasing in $\t$. We also know that $\k(\t,\Gamma)$ is strictly increasing in $\t$ and $\g$. Thus, $s_1(\t)$ being non-increasing in $\t$ corresponds to the expectation of the posterior decreasing against competing forces of $\t$ increasing and the condition $\Gamma > \g_1(\t)$ admitting smaller observations of $\Gamma$. It is clear such a scenario would only occur under unrealistic assumptions imposed on the joint distribution of $\Theta$ and $\Gamma$.

\begin{proposition}\label{s1dec}
	As $m\to \infty$, $u(s_1(\tau))$ is almost surely continuously, strictly decreasing in $\tau$.
\end{proposition}
\begin{proof}
	See Appendix \hyperref[AppendixB]{B}.
\end{proof}

Proposition \ref{s1dec} states that the firm's expected profit from using $s_1$ strictly decreases in $\tau$. In other words, the predictor of qualification $Q$ makes the firm worse off when prejudice is high. Thus, a firm enjoys environments with low discrimination and suffers from prior eras of high discrimination. Even further, the firm benefits more from a data set that favors disadvantaged applicants, rather than one that is entirely impartial. 

\begin{proposition}\label{s2incdec}
	As $m\to \infty$, $u(s_2(\tau))$ is almost surely continuously strictly increasing, then strictly decreasing in $\tau$.
\end{proposition}
\begin{proof}
	See Appendix \hyperref[AppendixB]{B}.
\end{proof}

Similarly, Proposition \ref{s2incdec} reveals that the predictor of acceptance $A$ makes the firm worse off when prejudice is high. On the other hand, profits decline in environments where disadvantaged applicants are highly favored. Thus, the firm is better off with $s_2$ in environments without extreme levels of bias in either direction. Figure \ref{fig:fig4} is a conceptual illustration of the comparative statics of the expected profits of both algorithms over $\tau$. The vertical dotted line represents the point at which expected profit of $s_2$ begins to decrease in $\tau$.

The intuition behind Proposition \ref{s1dec} stems from the fact that $s_1$ assigns falsely high scores to more unqualified applicants as $\tau$ increases, i.e., ``bias reversal.'' This leads to the firm hiring more unqualified applicants when using $s_1$. By Corollary \ref{cor1}, we know that $\lim_{\tau \to -x_u}s_1 = \E[Q\mid \t]$, which is the best prediction of $Q$ given the observables. The accuracy of $s_1$ in this case allows the firm to make well-informed hires, leading to high expected profit. On the other hand, since $s_2$ predicts acceptance in the first-stage, extreme levels of $\tau$ in either direction lead to inaccurate predictions, explaining why the expected profit of $s_2$ drops off as $\tau$ tends towards either $-x_u$ or $x_q$.

	\begin{figure}[H]
	\centering
	\captionsetup{justification=centering}
	\resizebox{12cm}{!}{

\linespread{1.25}
\pgfplotsset{compat = newest}
\begin{tikzpicture}
\begin{axis}[
title = {(a) $\E[\post(s_1(\tau))]$},
xmin = -11,
xmax = 11,
ymin = -0.07,
ymax = 1,
xtick = {-10, 10},
ytick = {0, 1},
clip = false,
scale = 0.8,
x label style = {at={(axis description cs:0.5,-0.1)},anchor=north}, xlabel={Human Prejudice $\tau$},
y label style={anchor=south}, ylabel={Expected Profit of $s_1$},
xticklabels={$-x_u$,$x_q$},
yticklabels = {},
]
\addplot[domain = -10:10, restrict y to domain = -10:10, samples = 300, color = blue] {0.9 * (0.79 - 0.0286 * x  - 0.002988 * x^2 - 0.0000943 * x^3)};
\end{axis}
\end{tikzpicture}
\hspace{20pt}
\begin{tikzpicture}
\begin{axis}[
title = {(b) $\E[\post(s_2(\tau))]$},
xmin = -11, xmax = 11,
ymin = -0.07, ymax = 1,
xtick = {-10, 10},
ytick = {0, 1},
clip = false,
scale = 0.8,
x label style = {at={(axis description cs:0.5,-0.1)},anchor=north}, xlabel={Human Prejudice $\tau$},
y label style={anchor=south}, ylabel={Expected Profit of $s_2$},
xticklabels={$-x_u$,$x_q$},
yticklabels = {}
]
\draw[dashed] (1.3, 1) -- (1.3, -0.07);
\addplot[domain = -10:10, restrict y to domain = -10:10, samples = 300, color = red] {0.5 * (1.5 + 0.0186 * x  - 0.007388 * x^2 +  0.00043 * x^3)};
\end{axis}
\end{tikzpicture}
	\caption{Expected profit of $s_1$ and $s_2$ over human prejudice $\tau$}\label{fig:fig4}
\end{figure}
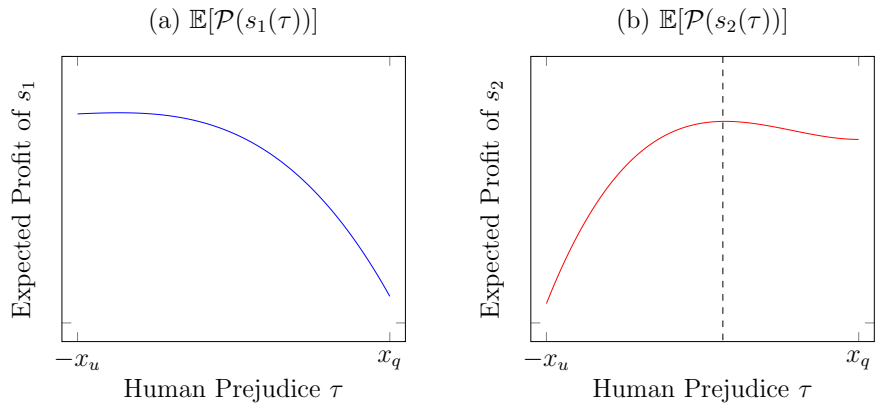

Now that the behaviors of $u(s_1(\tau))$ and $u(s_2(\tau)))$ are well understood, we compare the profits generated by the two algorithms. In the following, we directly discuss expected profit $\E[\post(s(\tau))]$ for convenience. Lemma \ref{slln} guarantees that average profit converges to expected profit almost surely.

\begin{proposition}\label{leftprofit}
	As $\tau\to -x_u$, the expected profit of $s_1$ is greater than that of $s_2$ if $v_u(1-\pi) > v_q\pi$.
\end{proposition}

\begin{proof}
	See Appendix \hyperref[AppendixB]{B}.
\end{proof}

\begin{proposition}\label{rightprofit}
	As $\tau\to x_q$, the expected profit of $s_2$ is greater than that of $s_1$ if and only if $v_u(1-\pi) > v_q\pi$.
\end{proposition}

\begin{proof}
	This is trivial once we compute $\lim_{\tau \to x_q}\E[\post(s_1(\tau))] = v_q\pi - v_u(1-\pi)$ and\\ $\lim_{\tau \to x_q}\E[\post(s_2(\tau))] = 0$.
\end{proof}

Proposition \ref{leftprofit} states that a firm from a low prior era of discrimination is better off with the predictor of qualification $s_1$ rather than the predictor of acceptance $s_2$. In other words, firms are better suited with ``bias reversal'' than ``bias in, bias out,'' when the data set maximally favors disadvantaged applicants. On the other end, a firm dealt a highly prejudiced data set is better off with $s_2$ than $s_1$.

The above results rely on the condition that $v_u(1-\pi)>v_q\pi$. This condition precisely means that the expected loss from hiring an unqualified applicant is greater than the expected profit from hiring a qualified applicant. This condition is equivalent to
\[
\E[Qv_q - (1 - Q)v_u] < 0,
\]
which rules out the pathological case where a firm assigning the baseline probability $\P(Q=1) = \pi$ to every applicant returns a positive expected profit. Thus, not only is this condition realistic, it allows us to focus on hiring environments where the firm would consign an algorithm to score applicants in the first place. Moreover, firms can fully observe whether or not this condition holds. For our final results, we add a mild regularity condition on the distribution of $\Theta$ to further rule out pathological cases.

\begin{defn}[Critical Signal]\label{critsig}
	\normalfont{The \textit{critical signal} $\underline{\t}$ of an algorithm $s(\tau)$ for some $\tau$ is the largest applicant signal that is rejected by} $s$:
	\[
	\underline{\t}(s(\tau)) := \inf\{\t : s(\t, \tau) > \b\}.
	\]
	\normalfont{We denote $\t_1(\tau) \equiv \underline{\t}(s_1(\tau))$ and $\t_2(\tau) \equiv \underline{\t}(s_2(\tau))$}.
\end{defn}

For example, if an applicant emits a signal $\t\in(\t_1,\t_2)$ for some $\tau$, this applicant is accepted by $s_1$ but rejected by $s_2$. Let $\tau_0$ denote the level of prejudice where $\E[\post(s_2(\tau))]$ is maximized, i.e., begins to decrease in Figure \ref{fig:fig4}.

\begin{assumption}\label{asmain}
	For $\tau \in (\tau_0, x_q)$, we have $\E[Q\mid \Theta \in (\t_1,\t_2)] < v_u/(v_u + v_q)$.
\end{assumption}

Intuitively, this assumption states that applicants who are accepted by $s_1$ and rejected by $s_2$ at high levels of $\tau$ are not overly likely to be qualified. Recall that by Theorem \ref{bin}, for high $\tau$, $s_1$ inflates scores while $s_2$ deflates scores. An applicant with a better signal $\t$ is able to more strongly resist the ``bias in, bias out'' quality exhibited by $s_2$, precisely because $s_2$ is increasing in $\t$. Assumption \ref{asmain} addresses the applicants with signals that have weaker $\t$, thus being rejected by $s_2$---these low signal-valued applicants would naturally have a reduced probability of being qualified. In particular, if $v_u(1-\pi) > v_q\pi$, then we have that $\pi < v_u/(v_u + v_q)$, which further rationalizes our assumption. To drive this point home, we show that Assumption \ref{asmain} holds at the limiting values of $\tau$, suggesting that any violations of this assumption would require unnatural behavior in the comparative statics of $\E[Q\mid \Theta \in (\t_1,\t_2)]$ over $\tau$.

\begin{lemma}\label{justify}
	If $v_u(1-\pi) > v_q\pi$, as $\tau \to \tau_0$ and $\tau\to x_q$, Assumption \ref{asmain} holds.
\end{lemma}

\begin{proof}
	See Appendix \hyperref[AppendixB]{B}.
\end{proof}

Agreeing with the intuition behind Assumption \ref{asmain}, all simulations conducted in Section \ref{sec:5} using the most common distributions with the MLRP (Assumption \ref{as1}) satisfy the assumption. In particular, the author is unaware of any continuous distributions that violate Assumption \ref{asmain}---this leads us to believe the assumption is true under very natural, yet specific parametric conditions. It is also entirely possible that the assumption is simply always true without any restrictions at all.

\begin{theorem}\label{main2}
	If $v_u(1-\pi) > v_q\pi$, there exists a unique $\underline{\tau}\in (-x_u,x_q)$ such that expected profits of $s_1$ and $s_2$ are equal.
\end{theorem}

\begin{proof}
	See Appendix \hyperref[AppendixB]{B}.
\end{proof}

We have shown that the expected profit from $s_1$ and $s_2$ are equal at only one level of prejudice. Moreover, Propositions \ref{leftprofit} and \ref{rightprofit} state that expected profit from $s_1$ is greater than $s_2$ as $\tau \to -x_u$ and from $s_1$ is less than $s_2$ as $\tau \to x_q$. Thus, we can determine which algorithm yields higher profits for any $\tau\in (-x_u, x_q)$ according to an immediate application of Theorem \ref{main2}:

\begin{corollary}\label{gprofit}
	Suppose $v_u(1-\pi) > v_q\pi$ and let $\tau\in (-x_u, x_q)$. The expected profit of $s_1$ is greater than that of $s_2$ if and only if $\tau < \ut$.
\end{corollary}

\begin{proof}
	The proof immediately follows from Theorem \label{main} and monotonicity of expected profits in $(-x_u, \tau_0)$.
\end{proof}

In other words, if the prior era was sufficiently prejudiced, then the firm receives greater profits from $s_2$. If disadvantaged applicants were sufficiently favored in the prior era, then the firm  receives greater profits from $s_1$. Figure \ref{fig:fig5} displays a complete, conceptual picture of the expected profits of $s_1$ and $s_2$. The red curve corresponds to the expected profit of $s_1$, and the blue corresponds to that of $s_2$. The black dot represents the unique point where profits are equal as stated in Theorem \ref{main2}.

\begin{figure}[H]
	\centering
	\captionsetup{justification=centering}
	\resizebox{9cm}{!}{

\linespread{1.25}
\pgfplotsset{compat = newest}
\begin{tikzpicture}
\begin{axis}[
xmin = -11,
xmax = 11,
ymin = -0.07,
ymax = 1,
xtick = {-10, 10},
ytick = {0, 1},
clip = false,
scale = 0.8,
x label style = {at={(axis description cs:0.5,-0.1)},anchor=north}, xlabel={Human Prejudice $\tau$},
y label style={anchor=south}, ylabel={Expected Profit},
xticklabels={$-x_u$,$x_q$},
yticklabels = {},
]
\addplot[domain = -10:10, restrict y to domain = -10:10, samples = 300, color = red] {0.9 * (0.79 - 0.0286 * x  - 0.002988 * x^2 - 0.0000943 * x^3)};
\addplot[domain = -10:10, restrict y to domain = -10:10, samples = 300, color = blue] {0.5 * (1.55 + 0.0186 * x  - 0.007388 * x^2 +  0.00043 * x^3)};
\addplot[color = black, mark = *, only marks, mark size = 2pt] coordinates {(-1.53, 0.745)};
\draw[dashed] (-1.54, 0.745) -- (-1.54, -0.07);
\draw[dashed] (-11, 0.1) -- (11, 0.1);
\draw[dashed] (-11, 0.7) -- (11, 0.7);
\node [right] at (11, 0.7) { $0$};
\node [below] at (-1.54, -0.07) { $\ut$};
\node [right] at (11, 0.1) { $v_q\pi - v_u(1-\pi)$};
\draw[->]  (-6.5, 0.86) to (-7, 0.8);
\draw[->] (6.3, 0.83) to (7, 0.75);
\node [above] at (6.3, 0.83) {$s_2$};
\node [above] at (-6.5, 0.86) {$s_1$};
\end{axis}
\end{tikzpicture}
	\caption{Comparison of expected profit of $s_1$ and $s_2$ over possible levels of prejudice $\tau$}\label{fig:fig5}
\end{figure}
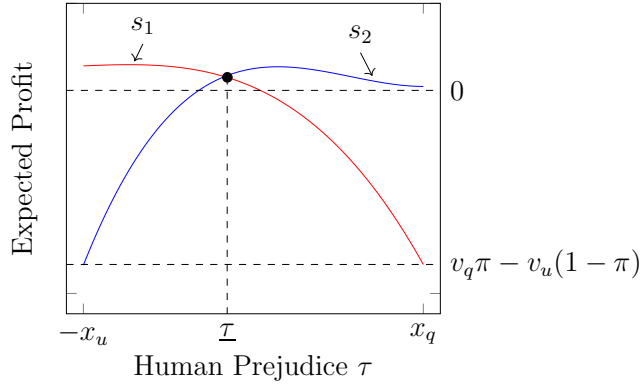

The above results demonstrate that firms endowed with a highly prejudice data set obtain higher profits from $s_2$, the algorithm that predicts acceptance $A$ and exhibits ``bias in, bias out.'' If the data set instead favors disadvantaged applicants, the firm receives higher profits from $s_1$, the algorithm that predicts qualification $Q$ and exhibits ``bias reversal.'' Interestingly, we notice that the results of Theorem \ref{main2} and Corollary \ref{gprofit} reflect the opposite trend of Proposition \ref{prop1} and Corollary \ref{greater}. These conflicting results reveal a deeper tension between the human decision-maker's taste-based discrimination in the first stage and the firm's profit in the second stage. Higher levels of prejudice in the first stage lead to $s_1$ reversing bias thereby favoring disadvantaged applicants; however, firm profits decline. Lower levels of prejudice in the first stage inflated scores of $s_2$, but, similarly, firm profits decline.

	\section{Simulation}\label{sec:5}
	
	In this section, we conduct a series of simulations emulating the first and second stage of our model. We display results using some of the most common distributions for $\Theta, \Gamma$ satisfying the monotone likelihood ratio property. We visualize the claims of Theorem \ref{main2} and Corollary \ref{gprofit}, and show that over the interval $\tau \in (-x_u,x_q)$, the expected profits intersect at a unique point, as guaranteed.
	
	\begin{figure}[H]
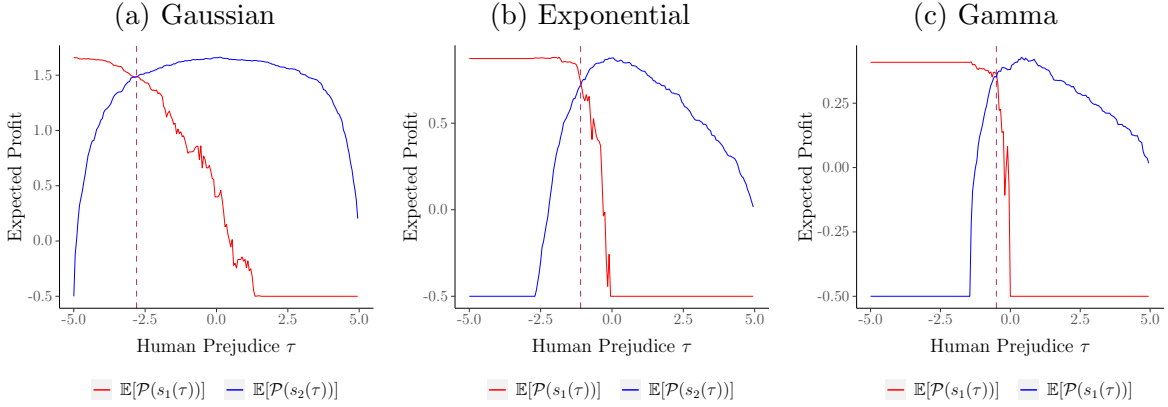

		\centering
		\begin{subfigure}[Figure A]{0.31\linewidth}
			\caption{Gaussian}
			\centering
			\resizebox{\linewidth}{!}{\input{Plots/normal_prof.tex}}
			\label{figa}
		\end{subfigure}
		\begin{subfigure}[Figure B]{0.31\linewidth}
			\caption{Exponential}
			\centering
			\resizebox{\linewidth}{!}{\input{Plots/exp_prof.tex}}
			\label{figb}
		\end{subfigure}
		\begin{subfigure}[Figure C]{0.31\linewidth}
			\caption{Gamma}
			\centering
			\resizebox{\linewidth}{!}{\input{Plots/gamma_prof.tex}}
			\label{figc}
		\end{subfigure}
		\captionsetup{justification=centering}
		\caption{Expected profits of $s_1$ and $s_2$ over $\tau$ for\\ three different distributions satisfying MLRP}
		\label{fig:fig6}
	\end{figure}
	
	All figures in Figure \ref{fig:fig6} are simulation-generated with $n= 1000$, $m = 5000$, $\pi = 0.5$, $x_q = x_u = 5$, $v_q = 5$, and $v_u=6$. In Figure \ref{figa}, we choose $(\Theta\mid Q=1)\sim N(4,1)$ and $(\Theta \mid Q=0 )\sim N(2,1)$. In Figure \ref{figb}, we have $(\Theta\mid Q=1)\sim \text{exp}(1/3)$ and $(\Theta\mid Q=0)\sim \text{exp}(1)$. In Figure \ref{figc}, we have $(\Theta\mid Q=1)\sim \Gamma(3, 1/3)$ and $(\Theta\mid Q=0)\sim \Gamma(3, 1)$.\footnote{The parameters of the Gaussian distribution are mean and variance, respectively. The parameter of the exponential distribution is rate. The parameters of the Gamma distribution are shape and rate, respectively.} The distribution of $\Gamma$ (which is correlated with $\Theta$) is chosen identically. We can indeed see that for all three distributions, the uniqueness of $\ut$ holds over $\tau \in (-x_u,x_q)$, which is represented by the vertical dotted line. Next, we display the same plots using the discrete Binomial and Poisson distributions.  Although the theoretical guarantees in this paper hold exclusively for continuous distributions, we find that simulations reflect similar outcomes with discrete distributions. This suggests that our results represent a deeper shadow of a larger phenomenon.
	
	\begin{figure}[H]
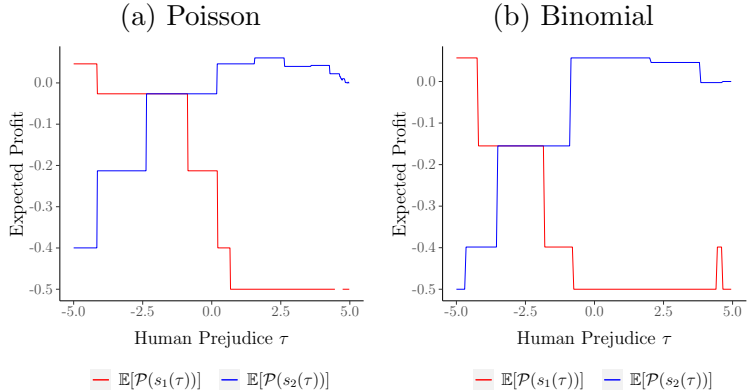

		\centering
		\begin{subfigure}[Figure A]{0.3\linewidth}
			\caption{Poisson}
			\centering
			\resizebox{\linewidth}{!}{\input{Plots/poi_prof.tex}}
			\label{figa2}
		\end{subfigure}
		\begin{subfigure}[Figure B]{0.3\linewidth}
			\caption{Binomial}
			\centering
			\resizebox{\linewidth}{!}{\input{Plots/bin_prof.tex}}
			\label{figb2}
		\end{subfigure}
		\captionsetup{justification=centering}
		\caption{Expected profits of $s_1$ and $s_2$ over $\tau$ for\\ discrete distributions satisfying MLRP}
		\label{fig:fig7}
	\end{figure}
	
	In Figure \ref{figa2}, we take $(\Theta \mid Q=1)\sim \text{Po}(3)$ and $(\Theta \mid Q=0)\sim \text{Po}(1)$. In Figure \ref{figb2}, we have $(\Theta \mid Q=1)\sim B(5, 0.5)$ and $(\Theta\mid Q=0)\sim B(5, 0.3).$\footnote{The first parameter corresponds to number of trials and the second parameter is success probability for each trial.} All other parameters are the same as Figure \ref{fig:fig6}. In these examples, expected profits are not strictly monotonic for either algorithms, but it appears that expected profit of $s_1$ is weakly decreasing while that of $s_2$ is weakly increasing then decreasing---these traits are analogous to Proposition \ref{s1dec} and \ref{s2incdec}. Similar trends appear for our main results. We emphasize that our results are not guaranteed for all discrete distributions.
	
	\section{Closing Remarks}\label{sec:6}
	
	In this paper, we constructed a rich two-stage model that incorporates a statistical discriminating firm hiring applicants through the use of an algorithm. We formalized average profits and compared the firm's level of success using either $s_1$ or $s_2$. We prove the almost sure existence of a unique point where average profit from $s_1$ and $s_2$ are equal as time goes on: thus, firms using highly prejudiced data sets find more success with $s_2$ over $s_1$ on average, while firms using data sets that greatly favor disadvantaged applicants are more successful with $s_1$ over $s_2$. Along the way, we showed that the expected profit of $s_1$ is strictly decreasing in $\tau$; the expected profit of $s_2$ is strictly increasing, then decreasing in $\tau$. These results reveal important considerations that firms must make when using algorithms that are weighed down by a prejudiced training data set from a previous time period. Even if a firm is completely impartial, the algorithm that maximizes profits is influenced by the level of prejudice embedded in the data. In our case, firms using highly prejudiced data find more profit from the algorithm that exhibits ``bias in, bias out.'' On the other hand, firms using data that favors disadvantaged applicants are more profitable with the algorithm that exhibits ``bias reversal.''
	
	 As an important extension, one may consider competitive environments where firms compete by choosing algorithms that maximize profits. In such a scenario, one may ask what type of algorithm is used in equilibrium, depending on the level of prejudice embedded in that industry's prior era. More generally, if there are $n$ firms, one may allow disadvantaged applicants to apply to different firms depending on perceived levels of prejudice. If each applicant observes a noisy signal for each of the firm's data set's prejudice $\tau_i$ for each $i\in [n]$, applicants will naturally apply to the firm that maximizes their individual utility. Knowing this, firms determine their algorithm of choice in response to the applicants' decisions in order to maximize expected profits. Even further, we may allow the firms to choose from a large space of algorithms in order to maximize expected profits---this greater generality would better inform the nature of the algorithm used in competitive equilibrium among the firms. 
	
	\section*{Appendix}
	
	\subsection*{Appendix A: Proofs from Section \ref{sec:3}}\label{AppendixA}
	
	\begin{proof}[Proof of Lemma \ref{cont}]
		We may write
		\begin{equation}\label{eq-dag}
			\E[Q\mid \Theta= \t, A=1] = \frac{\P(A=1\mid Q=1,\Theta =\t)\P(Q=1\mid \Theta=\t)}{\P(A=1\mid \Theta = \t)}.
		\end{equation}
		We see that $\P(Q=1\mid \Theta=\t)$ is trivially continuous in $\tau$, since it is constant.
		\begin{equation}\label{eq-ast}
			\P(A=1\mid \Theta = \t) = \P(\Gamma > \g_1(\tau)\mid \Theta = \t) = 1 - \P(\Gamma\leq \g_1(\tau)\mid \Theta =\t).
		\end{equation}
		
		It's easy to see that $\g_1(\tau)$ is continuous in $\tau$ since the right-hand side of the inequality in (\ref{eq:g1}) is continuous in $\tau\in (-x_u,x_q)$, and the left-hand side is continuous in $\g$ by Assumption \ref{as1}. Then, since the right-hand term is a CDF, and $\Gamma$ is a continuous random variable, it takes values in the open interval $(0,1)$ and is continuous. In particular, (\ref{eq-ast}) is never 0 for all $\t$, and so the right-hand side of (\ref{eq-dag}) is well-defined. Recall that (\ref{eq-ast}) is equal to $s_2$, so $s_2$ is continuous in $\tau$. We can identically state that $\P(A=1\mid Q=1, \Theta=\t)$ is continuous in $\tau$. Thus, $s_1$ is the ratio of two continuous functions with non-zero denominator, and we conclude that $s_1$ is continuous in $\tau$.
	\end{proof}

	\begin{proof}[Proof of Proposition \ref{prop1}]
		Recall that we may write $s_2$ as follows.
		\[
		\E[A\mid \Theta = \t] = 1- \P(\Gamma\leq \g_1(\tau)\mid \Theta = \t).
		\]
		Then we have:
		\[
		\lim_{\tau\to -x_{u}} \frac{(1-\pi)(x_u+\tau)}{\pi(x_q-\tau)} = 0.
		\]
		This implies that $\lim_{\tau\to -x_{u}}\g_1(\tau) = -\infty.$ Thus:
		\[
		\lim_{\tau\to -x_{u}} 1 - \P(\Gamma\leq \g_1(\tau) \mid \Theta =\t) = 1.
		\]
		Recall that $s_1(\t, \tau)$ is given by:
		\[
		s_1(\t, \tau) = \frac{\P(A=1\mid Q=1,\Theta =\t)\P(Q=1\mid \Theta=\t)}{\P(A=1\mid \Theta = \t)}.
		\]
		It's easy to see that as $\tau \to -x_{u}$, we have
		\[
		\P(A=1\mid Q=1,\Theta = \t) = \P(A=1\mid \Theta = \t) = 1.
		\]
		It follows that
		\begin{align*}
			\lim_{\tau\to -x_{u}} s_1(\t, \tau) &= \P(Q=1\mid \Theta = \t)
		\end{align*}
		Next, we evaluate the limit of $s_2$ as $\tau\to x_{q}$. 
		\[
		\lim_{\tau\to x_{q}} 1 - \P(\Gamma\leq \g_1(\tau) \mid \Theta = \t) = 0.
		\]
		For $s_1$, noting that $A=1\iff \Gamma > \g_1(\tau)$, by the law of iterated expectation we get:
		\begin{align*}
			s_1(\t, \tau) &= \E[Q\mid \Theta =\t, \Gamma > \g_1(\tau)]\\
			&= \E[\E[Q\mid \Theta = \t, \Gamma]\mid \Gamma > \g_1(\tau)]\\
			&= \E[\k(\t, \Gamma) \mid \Gamma > \g_1(\tau)]
		\end{align*}
		The condition $A=1$ is equivalent to $\k(\t, \g) x_q - (1-\k(\t, \g))x_u > \tau.$ For $\tau\to x_q$, the accepted applicants are such that:
		\[
		\k(\t, \g)x_q - (1-\k(\t, \g))x_u > x_q\implies \k(\t, \g) = 1.
		\]
		Therefore, we have
		\[
		\lim_{\tau\to x_{q}}s_1(\t, \tau) = 1.
		\]
		We have shown that $s_2$ strictly decreases from 1 to 0 over the interval $(-x_u,x_q)$, and $s_1$ strictly increases from $\P(Q=1\mid \Theta=\t)$ to 1 over the interval $(-x_u,x_q)$. By Lemma \ref{cont}, we know that both are continuous, which proves existence. Uniqueness trivially holds by Theorem \ref{bin}.
	\end{proof}
	
	\subsection*{Appendix B: Proofs from Section \ref{sec:4}}\label{AppendixB}
	
	\begin{customdefn}{B.1}[Critical Score]
		\normalfont{We define the \textit{critical score} $\beta$ the smallest score needed for an applicant to be accepted.}
		\[
		\b := \inf\{s: s(\tau)v_q - (1-s(\tau))v_u > 0\}.
		\]
	\end{customdefn}
	
	When $s=\beta$, we have $\mathcal{N}(s_i) = 0$. It's easy to see that $\mathcal{N}(s_i)$ is strictly increasing in $s$, and $A'(s_i)$ is weakly increasing in $s$. 
	
	\begin{proof}[Proof of Proposition \ref{s1dec}]
		
		\begin{customlemma}{B.3}\label{slln}
		For $s\in \{s_1,s_2\}$, we have that $u(s(\tau)) \to \E[\post(s(\tau))]$ almost surely.
		\end{customlemma}
	
		\begin{proof}
			In order to apply the strong law of large numbers, it suffices to show that $\E|\post(s(\tau))| < \infty$, which is clearly true since $\post(s(\tau)) \in \{0, x_q, -x_u\}$.
		\end{proof}
		
		Applying Lemma \ref{slln}, we investigate $\E[\post(s_1(\tau))]$. Note that $\post(s_1(\tau)) = 0$ if $s_1 < \b$ because $A' = 0$.
		\begin{align*}
		\E[\post(s_1(\tau))] &= \P(s_1 > \b)\E[\post(s_1(\tau)) \mid s_1 > \b] + \P(s_1 \leq \b)\E[\post(s_1(\tau)) \mid s_1 \leq \b]\\
		&= \P(s_1 > \b)\E[Qv_q - (1-Q)v_u \mid s_1 > \b]\\
		&= \P(\Theta > \t_1(\tau))(v_q + v_u)\E[Q\mid \Theta > \t_1(\tau)] - v_u\P(\Theta > \t_1(\tau))\\
		&= (v_q+v_u)\int_{\t_1(\tau)}^\infty \E[Q\mid \Theta = \t]f(\t)d\t - v_u\P(\Theta > \t_1(\tau)).
		\end{align*}
	The first line follows from the law of total expectation. The third line follows upon recognizing that $s_1^{-1}(\b) = \t_1(\tau)$ where $\t_1(\tau)$ is defined in Definition \ref{critsig}. The last line follows from the law of iterated expectation. We know that $\t_1(\tau)$ is strictly decreasing, so it is differentiable almost everywhere by Lebesgue's theorem on monotone functions. The derivative is also strictly negative almost everywhere. Let $\tau$ be such a point. Then:
	\begin{gather*}
		\frac{\partial}{\partial \tau} (v_q+v_u)\int_{\t_1(\tau)}^\infty \E[Q\mid \Theta = \t]f(\t)d\t - v_u\P(\Theta > \t_1(\tau))\\
		=\frac{\partial}{\partial \tau} (v_q+v_u)\left(\int_{\t_1(\tau)}^A \E[Q\mid \Theta = \t]f(\t)d\t + \int_{A}^\infty \E[Q\mid \Theta = \t]f(\t)d\t  \right) - v_u\P(\Theta > \t_1(\tau)) \\
		=-(v_q + v_u)\phi(\t_1)f(\t_1)\t_1' + v_uf(\t_1)\t_1'
	\end{gather*}
where $A$ is some finite constant greater than $\t_1(\tau)$. The last line follows from Leibniz's integral rule. The derivative is negative if and only if:
\begin{align*}
	(v_q + v_u)\phi(\t_1)f(\t_1)\t_1' > v_uf(\t_1)\t_1' &\iff (v_q + v_u)\phi(\t_1) < v_u\\
	&\iff \phi(\t_1) < \b.
\end{align*}
We know that $\phi(\t_1)$ is strictly increasing in $\t_1$ by Assumption \ref{as1}, so the supremum of $\phi(\t_1)$ occurs when $\tau \to -x_u.$ Recall that $\t_1 = \inf\{\t : s_1 > \b\}$, so $\lim_{\tau \to -x_u} \t_1 = \inf\{\t : \phi(\t) > \b\}$. Thus, $\lim_{\tau \to -x_u}\phi(\t_1) = \b$. Thus, for any $\tau > -x_u$, the derivative of $\E[\post(s_1(\tau))]$ is strictly negative, when the derivative of $\t_1(\tau)$ exists and is non-zero.

The measure space for $\tau$ is given by $((-x_u, x_q), \mathbf{B}, \mathfrak{m})$ where $\mathbf{B}$ is the Borel $\sigma$-field and $\mathfrak{m}$ is the Lebesgue measure. Let $C$ and $D$ be the sets where $\t_1'$ exists and is non-zero, respectively. We know that $\mathfrak{m}(C) = \mathfrak{m}(D) = 0$. The set on which the derivative of $\E[\post(s_1(\tau))]$ exists is $C\cup D$. By countable subadditivity, we have that $\mathfrak{m}(C \cup D)\leq \mathfrak{m}(C) + \mathfrak{m}(D) = 0$. Thus, $\mathfrak{m}(C\cup D) = 0$, so the derivative of $\E[\post(s_1(\tau))]$ exists almost everywhere and is negative almost everywhere. It is well known that a function with negative derivative almost everywhere is strictly decreasing. 
	\end{proof}

\begin{proof}[Proof of Proposition \ref{s2incdec}]
	The proof is similar to that of Proposition \ref{s1dec}. Since $\t_2(\tau)$ is strictly increasing in $\tau$, it is differentiable almost everywhere. The derivative is also strictly positive almost everywhere. Let $\tau$ be such a point and we have:
	\begin{align*}
		\frac{\partial}{\partial \tau} \E[\post(s_2(\tau))] &=  \frac{\partial}{\partial \tau} (v_q+v_u)\int_{\t_2(\tau)}^\infty \E[Q\mid \Theta = \t]f(\t)d\t - v_u\P(\Theta > \t_2(\tau))\\
			&=-(v_q + v_u)\phi(\t_2)f(\t_2)\t_2' + v_uf(\t_2)\t_2'.
	\end{align*}
This is positive if and only if:
\begin{align*}
v_uf(\t_2)\t_2' > (v_q + v_u)\phi(\t_2)f(\t_2)\t_2' &\iff \b > \phi(\t_2).
\end{align*}
We know that $\phi(\t_2)$ is strictly increasing in $\tau$ because $\t_2$ is strictly increasing in $\tau$. Taking limits:
\[
\lim_{\tau \to -x_u} \phi(\t_2) = 0,\quad \lim_{\tau \to x_q} \phi(\t_2) = 1.
\]
Since $\b\in (0,1)$, we conclude that there exists a unique point $\delta$ such that $\E[\post(s_2(\tau))]$ has positive derivative almost everywhere on $(-x_u, \delta)$ and negative derivative almost everywhere on $(\delta, x_q)$. Using an identical argument to the Proof of Proposition \ref{s1dec}, we conclude that $\E[\post(s_2(\tau))]$ strictly increasing on $(-x_u, \delta)$ and strictly decreasing on $(\delta, x_q)$.
\end{proof}
	
	\begin{proof}[Proof of Proposition \ref{leftprofit}]
		For convenience, denote $\phi(\t) := \P(Q=1\mid \Theta = \t)$.
		\begin{align*}
		\E[\post(s(\tau))] &= v_q\P(A'=1,Q=1) - v_u\P(A'=1, Q=0)\\
			&= v_q\pi \P(s>\b | Q=1) - v_u(1-\pi)\P(s>\b\mid Q=0).
		\end{align*}
		Then we compute:
		\begin{align*}
			\lim_{\tau \to -x_u} \E[\post(s_1(\tau))] &= v_q\pi \P(\phi(\Theta) > \b \mid Q=1) - v_u(1-\pi)\P(\phi(\Theta) > \b\mid Q=0)\\
			\lim_{\tau \to -x_u}\E[\post(s_2(\tau))] &= v_q\pi - v_u(1-\pi)
		\end{align*}
		by Corollary \ref{cor1}. Proceed as follows.
		\begin{align*}
			v_q\pi \P(\phi(\Theta) > \b \mid Q=1) - v_u(1-\pi)\P(\phi(\Theta) > \b \mid Q=0) > v_q\pi - v_u(1-\pi) &\iff\\
			v_q\pi(\P(\phi(\Theta) > \b\mid Q=1) - 1) > v_u(1-\pi)(\P(\phi(\Theta) > \b \mid Q=0) - 1) &\iff\\
			\frac{\pi}{1-\pi}\left(\frac{\P(\phi(\Theta) > \b\mid Q=1) - 1}{\P(\phi(\Theta) > \b \mid Q=0) - 1}\right) < \frac{v_u}{v_q} &\iff\\
			\frac{v_u}{v_q} > \frac{\pi}{1-\pi}\left(\frac{\P(\phi(\Theta) \leq \b\mid Q=1)}{\P(\phi(\Theta) \leq \b \mid Q=0)}\right).
		\end{align*}
		Since $v_u/v_q > \pi/(1-\pi)$, all that is left to show is that the term in parentheses is less than or equal to 1. Denote $B(\Theta) = \1\{\phi(\Theta) \leq \b\}$. We know that $\phi(\t)$ is strictly increasing in $\t$ from Assumption \ref{as1} (in particular, $f_q(\t)/f_u(\t)$ satisfies the strict MLRP). Thus, $B(\t)$ is weakly decreasing in $\t$. Let $\mu$ denote the conditional law of $\Theta$ with $Q=1$ and $\nu$ the conditional law of $\Theta$ with $Q=0$. That is, $\mu$ corresponds to the CDF $F_q(\t)$ and $\nu$ corresponds to $F_u(\t)$. By the strict MLRP, we have $F_q(\t) < F_u(\t)$ for all $\t$. In other words, $\mu$ has strict first-order stochastic dominance over $\nu$. It is well known that this equivalently means:
		\[
		\E_{\Theta \sim \mu}[B(\Theta)] \leq \E_{\Theta \sim \nu}[B(\Theta)]
		\]
		for non-increasing $B(\t)$. But this is equivalent to:
		\[
		\P(\phi(\Theta) \leq \b \mid Q=1) \leq \P(\phi(\Theta) \leq \b \mid Q=0).
		\]
		This concludes.
	\end{proof}

\begin{proof}[Proof of Lemma \ref{justify}]
	This Lemma is true as $\tau\to \tau_0$ from an immediate application from the following fact:
	
	\begin{customlemma}{B.4}\label{inter}
		As $\tau \to \tau_0$, the expected profit of $s_2$ is greater than that of $s_1$ if $v_u(1- \pi) > v_q\pi$.
	\end{customlemma}
	
	\begin{proof}[Proof of Lemma \ref{inter}]
		From Proposition \ref{s1dec} and \ref{s2incdec}, we know that expected profit of $s_1$ and $s_2$ are strictly decreasing and increasing in $\tau\in (-x_u,\tau_0)$. Thus, if there exists a $\tau\in (-x_u,\tau_0)$ such that expected profits are equal, it is unique in that interval, and by monotonicity, the claim holds. Recall that $\phi(\t_2(\tau_0)) = \b$ and $\phi(\t_1(\tau)) < \b$ for all $\tau$. Since $\phi$ is monotone, we have that $\t_2(\tau_0) > \t_1(\tau_0)$. From the proofs of Proposition \ref{s1dec} and \ref{s2incdec}, we showed that $\phi(\t_1(\tau)) = \b$ and $\phi(\t_2(\tau)) = 0$ as $\tau \to -x_u$, so $\t_1(\tau) > \t_2(\tau)$ as $\tau \to -x_u$. Since $\t_1(\tau)$ and $\t_2(\tau)$ are strictly decreasing and increasing in $\tau$, we conclude there exists a unique point where $\t_1(\tau) = \t_2(\tau)$ for $\tau \in (-x_u, \tau_0)$. Proceed as follows:
		\begin{align*}
			\E[\post(s_1(\tau))] = \E[\post(s_2(\tau))] &\iff\\
			v_q\pi \P(s_1>\b | Q=1) - v_u(1-\pi)\P(s_1>\b\mid Q=0) &=\\ v_q\pi \P(s_2>\b | Q=1) - v_u(1-\pi)\P(s_2>\b\mid Q=0) &\iff\\
			v_q\pi \P(\Theta>\t_1(\tau) | Q=1) - v_u(1-\pi)\P(\Theta>\t_1(\tau)\mid Q=0) &=\\ v_q\pi \P(\Theta>\t_2(\tau) | Q=1) - v_u(1-\pi)\P(\Theta > \t_2(\tau)\mid Q=0) &\iff\\
			v_q\pi(1-F_q(\t_1)) - v_u(1-\pi)(1 - F_u(\t_1)) &=\\ v_q\pi(1-F_q(\t_2)) - v_u(1-\pi)(1 - F_u(\t_2)) &\iff\\
			v_q\pi(F_q(\t_2) - F_q(\t_1)) = v_u(1-\pi)(F_u(\t_2) - F_u(\t_1)).
		\end{align*}
		When $\t_1(\tau) = \t_2(\tau)$, the equality holds. This concludes.
	\end{proof}
	Thus, at $\tau_0$, we have $\t_2 > \t_1$ and
	\[
	\E[\post(s_2(\tau_0))] > \E[\post(s_1(\tau_0))].
	\]
For arbitrary algorithm $s$, we have
	\begin{gather*}
		\E[\post(s(\tau_0))]  =\P(\Theta < \t_1)\E[\post(s(\tau_0))\mid \Theta < \t_1]+\\ \P(\Theta \in (\t_1,\t_2))\E[\post(s(\tau_0))\mid \Theta \in (\t_1,\t_2)] +
		\P(\Theta > \t_2)\E[\post(s(\tau_0))\mid \Theta > \t_2].
	\end{gather*}
If $\t < \t_1(\tau_0)$, we have that $A'(s_1(\tau_0)) = A'(s_2(\tau_0)) = 0$. Thus:
\[
\E[\post(s_1(\tau_0))\mid \Theta < \t_1] = \E[\post(s_2(\tau_0))\mid \Theta < \t_1] = 0.
\]
If $\t > \t_2(\tau_0)$, then $A'(s_1(\tau_0)) = A'(s_2(\tau_0)) = 1$. Thus:
\[
\E[\post(s_1(\tau_0))\mid \Theta > \t_2] = \E[\post(s_2(\tau_0))\mid \Theta > \t_2] = (v_q + v_u)\pi - v_u.
\]
Putting these facts together, we have
\[
\E[\post(s_2(\tau_0)) - \post(s_1(\tau_0))] >0 \iff \E[\post(s_2(\tau_0)) - \post(s_1(\tau_0))\mid \Theta \in (\t_1,\t_2)] > 0.
\]
Since $A'(s_2(\tau_0))=0$, we have $\post(s_2(\tau_0)) = 0$:
\[
-\E[\post(s_1(\tau_0))\mid \Theta \in (\t_1,\t_2)] > 0 \iff \E[Q\mid \Theta \in (\t_1,\t_2)] < \b.
\]
This concludes. For $\tau \to x_q,$ it suffices to take the limit:
\[
\lim_{\tau \to x_q} \E[Q\mid \Theta \in (\t_1,\t_2)] = \E[Q\mid \Theta \in (-\infty, \infty)] = \pi < \b.
\]
\end{proof}

\begin{proof}[Proof of Theorem \ref{main2}]
	
By Propositions \ref{s1dec} and \ref{s2incdec}, we have that $\E[\post(s_1(\tau))] - \E[\post(s_2(\tau))]$ is continuous. Proposition \ref{leftprofit} and Proposition \ref{rightprofit} tells us that $\lim_{\tau \to -x_u} \E[\post(s_1(\tau))] - \E[\post(s_2(\tau))] > 0$ and $\lim_{\tau \to x_q} \E[\post(s_1(\tau))] - \E[\post(s_2(\tau))]<0$ respectively. By the intermediate value theorem, there exists a point $\tau$ where $\E[\post(s_1(\tau))]=\E[\post(s_2(\tau))]$.

\begin{customlemma}{C.6}
	If $v_u(1- \pi) > v_q\pi$, there exists a unique $\ut\in (-x_u, \tau_0)$ such that expected profits of $s_1$ and $s_2$ are equal in this interval.
\end{customlemma}

\begin{proof}
	This immediately follows from Lemma \ref{inter} and Proposition \ref{s1dec} and \ref{s2incdec}.
\end{proof}

		All that remains is to show that expected profits do not intersect in the interval $(\tau_0, x_q)$. In particular, we wish to show $\E[\post(s_2(\tau))] > \E[\post(s_1(\tau))]$ for $\tau \in (\tau_0,x_q).$ From the proof of Lemma \ref{justify}, we can see that this is implied by
		\[
		\E[Q\mid \Theta \in (\t_1,\t_2)] < \b
		\]
		which is given by Assumption \ref{asmain}.
	\end{proof}
	\nocite{*}
	\bibliographystyle{aea}
	\bibliography{bib_2.bib}

\end{document}